\journal{Journal}
\begin{document}

\title{Outcome-Assisted Multiple Imputation of Missing Treatments}

\author[1]{Joseph Feldman}

\author[2]{Jerome P. Reiter}

\authormark{FELDMAN and REITER}
\titlemark{Outcome-Assisted Multiple Imputation of Missing Treatments}

\address[1]{\orgdiv{Department of Statistical Science}, \orgname{Duke University}, \orgaddress{\state{North Carolina}, \country{USA}}}

\corres{Joseph Feldman. \email{joseph.feldman@duke.edu}}


\abstract[Abstract]{We provide guidance on multiple imputation of missing at random treatments in observational studies. Specifically, analysts should account for both covariates and outcomes, i.e., not just use propensity scores, when imputing the missing treatments.  To do so, we develop outcome-assisted multiple imputation of missing treatments: the analyst fits a regression for the outcome on the treatment indicator and covariates, which is used to sharpen the predictive probabilities for missing treatments under an estimated propensity score model. 
We derive an expression for the bias of the inverse probability weighted estimator for the average treatment effect under multiple imputation of missing treatments, and we show theoretically that this bias can be made small by using outcome-assisted multiple imputation. 
Simulations demonstrate empirically that outcome-assisted multiple imputation can offer better inferential properties than using the treatment assignment model alone. 
We illustrate the procedure in an analysis of data from the National Longitudinal Survey of Youth.}

\keywords{Causal, observational, propensity score, weighting}


\maketitle

\renewcommand\thefootnote{}

\renewcommand\thefootnote{\fnsymbol{footnote}}
\setcounter{footnote}{1}
\section{Introduction}

In observational studies, the treatment status of some study subjects may be unknown.  For example, the treatment of interest may be an environmental exposure or health-related behavior that is measured only for a subset of the study subjects \citep{lee2021framework, mitra2023latent, feldman2024gaussian, chen2025nonparametric}. Or, when the data for the observational study come from a survey, some individuals may not respond to the question that defines the treatment status \citep{molinari2010missing}.
In such cases, analysts can improve the accuracy of causal inferences by accounting for (i.e., not disregarding) the missingness in the treatment status, especially when the treatments are not missing completely at random.



A flexible and convenient way to handle the missing treatments is to use multiple imputation \citep{mitra2023latent, shan:etal}.  The analyst estimates a model for treatment status using the available data. Using this model, the analyst generates multiple, plausible values of the missing treatments to create completed datasets. In each completed dataset, the analyst applies their causal inference procedure of choice, such as propensity score matching or inverse probability weighting, and combines the results via the multiple imputation combining rules \citep{rubin:1987}.

In practice, these models almost certainly  generate imputations that do not equal the corresponding true treatment statuses. 
While perfectly correct imputations may not be necessary for accurate inferences, 
poorly specified imputation models  can lead to 
inaccurate inferences \citep{mitra2023latent}.
As such, it is critical to use imputation models that capture essential features of the distribution of the missing treatments.  Imputation of missing treatments also forces the analyst to take a side on whether or not to use the outcome variables in the imputation model, with not using outcomes consistent with one interpretation of a ``design-first'' philosophy for causal inference 
\cite{rubin:design, imbens:rubin}.
Despite these complexities, as far as we are aware, there are few published works targeting multiple imputation for missing treatments.  

In this article, we present practical guidance for specifying multiple imputation models when treatments are missing at random. Using theoretical results, we argue that the imputation models for missing treatments generally should condition on both the covariates and outcomes to enable the most accurate inferences. Further, this theory suggests using a factorization
that employs a model for the treatment assignment  mechanism given covariates coupled with a model for the outcomes given  treatment and covariates. Importantly, the outcome model is used only to assist the imputation of missing treatments.  In each completed dataset, the analyst  estimates the treatment effect with their preferred causal estimator, which we do here using inverse probability weighting.
We refer to the entire procedure as Outcome-assisted Multiple Imputation of Treatments (OMIT).




The remainder of this article is organized as follows.  In Section \ref{prelim}, we review the causal inference setting that we work under.  As part of the review, we examine conditions under which complete-case analyses based on inverse probability weighting can provide valid estimates of treatment effects. We also discuss why multiple imputation approaches can be particularly appealing for causal inference in this setting as opposed to inverse probability weighting alone.  In Section \ref{sec:theory}, we derive a general expression for the bias of the multiple imputation point estimator using inverse probability weighting when treatments are missing.  We then explain why controlling for the outcome variable can make this bias go to zero. In Section \ref{sec:sim}, we present results of simulation studies that show the benefits of using OMIT.  The simulation studies suggest that OMIT can have a degree of robustness to outcome model misspecification. A key insight from the simulations, as well as the theoretical results, is that the outcome model can serve to sharpen the accuracy of the treatment imputations, even when it is (not grossly) misspecified. In Section \ref{sec:NLSY}, we apply OMIT to an illustrative application of inverse probability weighting using data from the National  Longitudinal Survey of Youth.  Finally, in Section \ref{sec:conclusion}, we summarize the key findings and identify several topics for future research.


\section{Preliminaries}\label{prelim}

\subsection{Causal Inference under the Potential Outcomes Framework}
We work in the potential outcomes framework to estimate causal effects \citep{rubin1974estimating}. For each unit $i$ in a study population with $n$ units, let $t_i = 1$ and $t_i=0$ indicate assignment to the treated and control conditions, respectively. For $i=1, \dots, n$, let $y_{i1}$ and $y_{i0}$  be the outcome under the treatment condition and the control condition, respectively.  We observe $y_i = t_iy_{i1} + (1-t_i)y_{i0}$, i.e., only one outcome is realized.  We adhere to the stable unit treatment value assumption (SUTVA) and strong ignorability \citep{rosenbaum1983central}; given a $d \times 1$ vector of covariates  $\boldsymbol{x}_i$, for all units $i$ we have $0 < p(t_i =1 \mid \boldsymbol x_i) < 1 $ and $(y_{i1},y_{i0}) \perp t_i \mid \boldsymbol x_i$. 

We focus on estimation of the average treatment effect (ATE) in the study population. 
For $i=1, \dots, n$, let $\tau_i = y_{i1} - y_{i0}$ be the treatment effect for unit $i$.
The ATE is given by 
\begin{equation}\label{ATE}
    \tau = n^{-1}\sum_{i=1}^{n} \tau_{i}.
\end{equation}
For $i=1, \dots, n$, let $e(\boldsymbol{x}_i) = p(t_i = 1 \mid \boldsymbol{x}_i)$ be the unit's propensity score, i.e., the probability of receiving treatment given the covariate profile $\boldsymbol x_i$\citep{rosenbaum1983central}.  
{The analyst estimates each $e(\boldsymbol{x}_i)$ typically using a binary regression of the treatments on the covariates $\boldsymbol x$.} For $i=1, \dots, n$, we abbreviate the estimated propensity score as $\hat{e}_i$.

To estimate $\tau$, we consider 
the inverse probability weighting  (IPW) estimator,
\begin{equation}\label{IPW}
    \hat{\tau} = n^{-1}\sum_{i=1}^{n} \frac{t_i y_i}{\hat{e}_i}  - \frac{(1-t_i)y_i}{(1-\hat{e}_i)}.
\end{equation}
To estimate \eqref{IPW}, analysts can use the \verb|survey| package\citep{lumley2020package} in \verb|R|, assuming an unequal probability sampling design.  This package also provides estimated standard errors, which we use in the simulations and NLSY analysis. 
Other approaches to estimating standard errors include sandwich estimators \citep{robins1994estimation}, asymptotic normal approximations \citep{lunceford2004stratification}, and bootstrapping approaches.   

We focus on the IPW estimator mainly because of its popularity but also in part because it facilitates the theoretical analyses in Section \ref{sec:theory}. 
The benefits of OMIT should extend to inferences constructed using other propensity-score based methods, including using matching \citep{rosenbaum1985constructing}, sub-classification \citep{rosenbaum1984reducing}, and balancing weights \citep{li2018balancing}.



\subsection{Effect of Missing Treatment Data on IPW Estimates}\label{sec:cc}

When treatments are missing, a naive application of the IPW estimator in \eqref{IPW} using only the complete cases (CC) can result in biased inferences for $\tau$.  We state this formally here as part of our review and illustrate it in the simulation studies.

For $i=1, \dots, n$, let $r_i = 1$ when $t_i$ is missing and $r_i = 0$ when $t_i$ is observed. 
The vectors of observed and missing treatment assignments are $\boldsymbol t_{obs} = \{t_{i}: r_i= 0\}$ and 
$\boldsymbol t_{mis} = \{t_i: r_i = 1\}$, respectively. For simplicity, here we presume that $\boldsymbol{x}_i$ and $y_i$ are completely observed for all units $i=1, \dots, n$. Defining $(\boldsymbol y, \boldsymbol r, \boldsymbol x) =\{y_i, r_i, \boldsymbol x_i\}_{i=1}^{n}$, the observed data are $(\boldsymbol y, \boldsymbol r, \boldsymbol x, \boldsymbol t_{obs})$.  We consider scenarios where the treatments are missing at random (MAR) \citep{rubin1976inference}, so that $p(r_{i} = 1 \mid \boldsymbol x_i, y_i, t_i) = p(r_{i} = 1 \mid \boldsymbol x_i, y_i)$, i.e., the probability of observing the treatment assignment depends on the  observed data.
Let $n_{obs} = \sum_{i=1}^n \boldsymbol{1}_{r_{i} = 0}$, where the indicator function $\boldsymbol{1}_a=1$ when the condition represented by $a$ is true and $\boldsymbol{1}_a=0$ otherwise. The CC version of \eqref{IPW} is given by
\begin{equation}\label{IPW-CC}
    \hat{\tau}_{CC} = n_{obs}^{-1}\sum_{i: r_{i} = 0} \frac{t_i y_i}{\hat{e}_i}  - \frac{(1-t_i)y_i}{(1-\hat{e}_i)}.
\end{equation}

In general, \eqref{IPW-CC} is a biased estimator of $\tau$, so that $\hat{\tau}_{CC}$ can be inaccurate even when $n_{obs}$ is large.  There are two general situations where this may not be the case. To see this, suppose for convenience that we have oracle, i.e., true, values of $e_{i}$ for all units in the complete cases and use them in place of $\hat{e}_i$ in \eqref{IPW-CC}. 
As stated in Proposition \ref{biased-CC}, which is proved in the supplementary material, in this setting \eqref{IPW-CC} offers unbiased estimates of $\tau$ in two specific scenarios.
\begin{proposition}\label{biased-CC} Assume strong ignorability and SUTVA, and that $p(\sum_{i=1}^{n} \boldsymbol{1}(r_i = 1) = n) = 0$ for any $n$. Assume we use the true propensity scores in $\hat{\tau}_{CC}$ as defined in \eqref{IPW-CC}. 
 If one of the following conditions holds:
\begin{itemize}
    \item there are homogeneous conditional treatment effects, i.e., $\tau_{i} = \tau_{j}, \forall i \neq j \in \{1, \dots, n\}$,
    \item treatments are missing completely at random (MCAR), i.e., 
    $p(r_{i} = 1 \mid \boldsymbol x_i) = p$ \ for \ $i=1, \dots, n$,
\end{itemize}
then $E(\hat{\tau}_{CC}) = \tau$, where the expectation is taken with respect to the joint distribution of the treatment assignment and missingness mechanism.
\end{proposition}

The assumption $p(\sum_{i=1}^{n} \boldsymbol{1}(r_i = 1) = n) = 0$ ensures that \eqref{IPW-CC} is always estimable.  Both the homogeneous treatment effects and MCAR assumptions 
are restrictive and may not be satisfied in practice. The first condition is violated when the covariates moderate individual treatment effects, which easily could be the case. 
The second condition
seems unlikely,
as missingness in the treatment variable may be explained by observed covariates. For example, and previewing the NLSY analysis, suppose 
individuals with certain demographic profiles 
are less likely to respond to the question that defines treatment or control status in the causal study.  
Then, the complete case data comprise a sub-population, so that CC estimates of $\tau$ may not generalize to the full study population.




\subsection{Multiple Imputation}\label{sec:MI}

In multiple imputation, the analyst estimates predictive models for the missing data given the observed data to create $M$ completed datasets, $D = (D^{(1)}, \dots, D^{(M)})$. In each $D^{(m)}$ where $m=1, \dots, M$, the analyst estimates the quantity of interest, in our case $\tau$, using some complete-data estimator, in our case $\hat{\tau}^{(m)}$.  
The analyst also estimates a complete-data variance $u^{(m)}$  corresponding to  $\hat{\tau}^{(m)}.$  For causal inference using the IPW estimator, each 
$\hat{\tau}^{(m)}$ is the value of  \eqref{IPW} computed with $D^{(m)}$, which we write explicitly  in \eqref{IPWmi} in Section \ref{sec:biasMI}, and $u_i^{(m)}$ is its accompanying 
estimated variance.  The analyst combines the results using the quantities,
\begin{eqnarray}
\bar{\tau} &=& M^{-1}\sum_{m=1}^M\hat{\tau}^{(m)}\label{eq:mipt}\\
\bar{u} &=& M^{-1}\sum_{m=1}^M u^{(m)} \label{eq:miubar}\\
b &=& (M-1)^{-1}\sum_{m=1}^M (\hat{\tau}^{(m)} - \bar{\tau})^{2}.\label{eq:mib}
\end{eqnarray}
The analyst makes inferences using $\bar{\tau} - \tau \sim t_{\nu}(0, T_M)$, where the variance $T_M = (1+1/M)b+\bar{u}$ and degrees of freedom $\nu = (M-1)(1+\bar{u}/((1+1/M)b))^2$. 
We note that the literature suggests multiple ways to compute propensity scores with multiple imputation  
\citep{mitra2016comparison}. 
We discuss our approach for OMIT in Section \ref{sec:theory} and Section \ref{sec:sim}.

As an alternative to multiple imputation with \eqref{IPWmi}, analysts can adjust the IPW estimator in \eqref{IPW-CC} to account for missingness in the treatments \citep{zhao2024covariate,zhang2016causal, williamson2012doubly}. The general idea is to 
multiply each estimated $\hat{e}_i$ by an estimate of $p(r_i=0 \mid \boldsymbol x_i, y_i)$. 
However, IPW with multiple imputation has some potential advantages over IPW with adjustments.  First, the adjustments to the IPW weights for missingness in $t$ can result in large adjusted weights, particularly when some of the probabilities of $r_i=0$ are near zero.  This in turn can lead to inflated variances; these need not arise when handling missing treatments using multiple imputation. For that matter, using the adjusted IPW requires estimating a model for $r_i$, whereas this may not be necessary when using IPW with multiple imputation. Second, multiple imputation offers flexibility in that, among other things, it can be applied readily with missing values also in covariates or outcomes.  These completed datasets 
can be used for multiple purposes, including analyses with different subsets of units or different causal estimands.  Third, it offers a prescriptive approach for obtaining uncertainty estimates that accounts for the missing data.  Fourth, it can be relatively robust to model misspecification. The imputation models only influence the cases with missing values, whereas adjusting the original IPW weights for a modeled missing data mechanism affects all observed records' contributions to the causal estimate.  

Another alternative is to bound values of the treatment effect estimate over all hypothetically possible values of the missing treatments \cite{molinari2010missing, mebane2013causal, kennedy2020efficient}. These non-parametric identification bounds can be wide in practice.


\section{Outcome-assisted Multiple Imputation of Treatments}\label{sec:theory}

Of course, to realize the benefits of multiple imputation, analysts require reasonable imputation models, to which we now turn.  We begin by deriving an expression for the bias of \eqref{eq:mipt} when each $\hat{\tau}^{(m)}$ is computed using \eqref{IPWmi}.

\subsection{Bias of Multiple Imputation Point Estimator} \label{sec:biasMI}


For $i=1, \dots, n$ in $D^{(m)}$, let $t^*_i$ be the imputed value of $t_i$ when $r_i=1$ and let $t_i^* = t_i$ when $r_i=0$. 
In each $D^{(m)}$ we compute    
\begin{eqnarray}
\hat{\tau}^{(m)} &=& n^{-1}\sum_{i=1}^n \frac{y_{i}t_i^*}{\hat{e}_{i}}-  \frac{y_{i}(1-t_i^*)}{(1-\hat{e}_{i})} \label{IPWmi}
\end{eqnarray}
 Because each $\hat{\tau}^{(m)}$ results from independent sampling from the imputation model, it suffices to examine the bias of  $\hat{\tau}^{(m)}$ for a single $m$. 
We can separate \eqref{IPWmi} into sums over the units where the imputed treatments match the true treatment status and where they do not. Recognizing that $y_i = y_{i1}$ when $(t_i^* = 0, t_i=1)$ and $y_i = y_{i0}$ when $(t_i^* = 1, t_i=0)$, we have 
\begin{eqnarray}
\hat{\tau}^{(m)} &=& (1/n)\left(\sum_{i: r_i=0} \frac{y_{i1}t_i}{\hat{e}_{i}} + \sum_{i: r_i=1, t_i=1, t_i^*=1} \frac{y_{i1}t_i}{\hat{e}_{i}} + \sum_{i: r_i=1, t_i=0, t_i^*=1} \frac{y_{i0}(1-t_i)}{\hat{e}_{i}}\right) \nonumber\\
&-& (1/n)\left(\sum_{i: r_i=0} \frac{y_{i0}(1-t_i)}{(1-\hat{e}_{i})} + \sum_{i: r_i=1, t_i=0, t_i^*=0} \frac{y_{i0}(1-t_i)}{(1-\hat{e}_{i})} + \sum_{i: r_i=1, t_i=1, t_i^*=0} \frac{y_{i1}t_i}{(1-\hat{e}_{i})}\right).\label{eq2}
\end{eqnarray}

To compute the bias,  we work with the finite population treatment effect in \eqref{ATE}.  The  potential outcomes and covariates are fixed, and the source of randomness is the process that assigns the true treatments, creates missingness in the treatments, and generates the imputations.
For convenience, to derive an expression for the bias 
we presume known and use the true values of 
$e_{i}$ for all $i=1, \dots, n$; thus, we replace $\hat{e}_i$ with $e_i$ in \eqref{eq2}.  In our simulations and data analyses, we estimate $\hat{e}_{i}$
using the observed data.  

For $i=1, \dots, n$, let $p_i = p(r_i =1 \mid \boldsymbol{x}_i, y_{i}, t_i)$
be the probability that the treatment assignment for unit $i$ is missing. Because we presume  MAR treatments, each $p_i= p(r_i =1 \mid \boldsymbol{x}_i, y_i)$. In addition, for $i=1, \dots, n$, let the imputation model probabilities be 
\begin{eqnarray}
m_{i11} &=& p(t_i^* =1 \mid \boldsymbol{x}_i, y_{i}, t_i=1, r_i=1) \label{m11}\\
m_{i01} &=& p(t_i^* =1 \mid \boldsymbol{x}_i, y_{i}, t_i=0, r_i=1) \label{m01}\\
m_{i10} &=& p(t_i^* =0 \mid \boldsymbol{x}_i, y_{i}, t_i=1, r_i=1) \label{m10} \\
m_{i00} &=& p(t_i^* =0 \mid \boldsymbol{x}_i, y_{i}, t_i=0, r_i=1).\label{m00}
\end{eqnarray}
These probabilities are derived from an analyst's imputation model.  Although we include $t_i$ in the conditioning in \eqref{m11}--\eqref{m00} to facilitate notation,  
the analyst does not condition on $t_i$ when computing  
 $(m_{i11}, m_{i10}, m_{i01}, m_{i00})$ since $t_i$ is unknown when $r_i=1$. 

We can rearrange \eqref{eq2} with the true $e_i$ into terms comprising the units with $t_i=1$ and the units with $t_i=0$. For the terms comprising units with $t_i = 1$, the expectation over $p(t_i, r_i, t_i^{*} \mid \boldsymbol x_i, y_i)$ is 
\begin{eqnarray}
n^{-1}E\left(\sum_{i: r_i=0} \frac{y_{i1}t_i}{e_{i}} + \sum_{i: r_i=1, t_i=1, t_i^*=1} \frac{y_{i1}t_i}{e_{i}} - \sum_{i: r_i=1, t_i=1, t_i^*=0} \frac{y_{i1}t_i}{(1-e_{i})}\right)
&=& n^{-1}\sum_{i=1}^n y_{i1}\left((1-p_i) + p_i m_{i11} - \frac{p_{i}m_{i10} e_{i}}{(1-e_{i})}\right). \label{numyt}
\end{eqnarray}
For example, the first term in the summation arises because $E(y_i t_i/e_{i}) = (y_i/e_{i}) E(t_i\boldsymbol{1}_{r_i = 0}) = y_i (1-p_i)$ .  For the terms comprising units with $t_i = 0$, a similar computation reveals 
\begin{eqnarray}
n^{-1}E\left(\sum_{i: r_i=0} \frac{y_{i0}(1-t_i)}{(1-e_{i})} + \sum_{i: r_i=1, t_i=0, t_i^*=0} \frac{y_{i0}(1-t_i)}{(1-e_{i})} - \sum_{i: r_i=1, t_i=0, t_i^*=1} \frac{y_{i0}(1-t_i)}{e_{i}}\right)
&=& n^{-1}\sum_{i=1}^n y_{i0}\left((1-p_i) + p_im_{i00} - \frac{p_im_{i01}(1-e_{i})}{e_{i}}\right). \label{numyc}
\end{eqnarray}
Combining \eqref{numyt} and \eqref{numyc}, we see that 
\begin{equation}\label{eq:Eimp}
    E(\hat{\tau}^{(m)}) = \left(n^{-1}\sum_{i=1}^{n}y_{i1} - y_{i0}\right)  -n^{-1}\sum_{i=1}^n p_i y_{i1}\left(1 - m_{i11} + \frac{m_{i10} e_{i}}{(1-e_{i})}\right) - p_iy_{i0}\left(1-  m_{i00} + \frac{m_{i01}(1-e_{i})}{e_{i}}\right). 
\end{equation}
Thus, the bias of $\hat{\tau}^{(m)}$ is 
\begin{equation}
B = -n^{-1}\sum_{i=1}^n p_i y_{i1}\left(1 - m_{i11} + \frac{m_{i10} e_{i}}{(1-e_{i})}\right) - p_iy_{i0}\left(1-  m_{i00} + \frac{m_{i01}(1-e_{i})}{e_{i}}\right).\label{eq:B}
\end{equation}

The expression for $B$ in \eqref{eq:B} offers insight into the effects of imputation model specification.  In particular, we can get $B=0$ by having $m_{i11}=m_{00}=1$ so that $m_{10}=m_{01}=0$.
Of course, a perfect imputation model is not realistic in practice. On the other hand, consider using $e_{i}$ as the imputation model, that is, $m_{i01}=m_{i11}=e_{i}$ and $m_{i10}=m_{i00}=1-e_{i}$.  In this case, $B = -n^{-1}\sum_{i=1}^n p_i(y_{i1} - y_{i0})$, so that the bias potentially could be a substantial fraction of $\tau$.  As suggested by these two examples, the key is to find a model that drives $m_{i11}$ and $m_{i00}$ toward one.


\subsection{OMIT Imputation}\label{sec:OMIT}


Using (an estimate of) $e_i$ as the imputation model foregoes a potentially useful source of information, namely  the outcomes.
To illustrate, consider data  
where $y_{i1}\sim \mbox{Normal}(\tau + x_i\beta, 0.1) $ and $y_{i0} \sim \mbox{Normal}(x_i\beta, 0.1)$, with $\tau=40$ and $0 < x_i \beta < 10$ for all $x_i$.
It is clear that values of $y_{i}>40$ must have been assigned treatment and values of $y_{i}<10$ must have been assigned control, a fact we can use beneficially for imputation of $t_i^*$ for units missing $t_i.$  Indeed, if the distribution of $x_i$ is the same for the treated and control units, disregarding the information in the outcomes results in terribly biased inferences. 


Of course, this is an extreme and contrived example.
Nonetheless, the use of outcomes to assist the imputations of missing treatments is supported by the derivations in Section \ref{sec:biasMI}. In particular, 
the expressions for the imputation models in \eqref{m11}--\eqref{m00} explicitly condition on $y_i$.  Following Bayes rule, these probabilities are proportional to a product of a probability for the proposed treatment and a density for the observed $y_i$ given the proposed treatment, both given covariates:
\begin{equation}\label{impmodel}
    p(t_i^* = 1 \mid y_i, \boldsymbol{ x}_i)\propto p(t_i^* =1 \mid \boldsymbol x_i)f(y_i \mid  \boldsymbol{x}_i,t_i^* =1).
\end{equation}
One way to view this decomposition is that the predictive probability of treatment assignment 
can be sharpened, meaning that it is pushed closer to 0 or 1, by the information provided by the outcome model $f(y_i \mid \boldsymbol{x}_i,t_i^* =1)$. 

This construction provides considerable modeling flexibility: the analyst can choose the treatment assignment and outcome models based on the features of the data.  For the latter model, one insight from Proposition \ref{biased-CC} is that specification of $f(y_i \mid t_i, \boldsymbol x_i)$ in \eqref{impmodel} should capture interactions between the treatment indicators and covariates in the presence of treatment effect heterogeneity. For example, and previewing our data analyses, a linear regression with treatment-covariate interactions can be an effective, yet simple, outcome model to use in \eqref{impmodel}.  Both models  can leverage state-of-the-art statistical machine learning techniques such as Bayesian Additive Regression Trees (BART \citep{BART}) or Gaussian processes \citep{GP}. We explore these benefits in our simulations in Section \ref{sec:sim}. 

We also can show theoretically that 
outcome-assisted multiple imputation estimators can have desirable statistical properties, as stated in Theorem \ref{unbiased-imp}. Here, we find it helpful to switch from the finite population perspective of Section \ref{sec:biasMI} to a super-population perspective that presumes the potential outcomes are draws from distributions.  In particular, we 
are able to show that the expectation of the bias term in \eqref{eq:B} can equal zero.
\begin{theorem}\label{unbiased-imp}
 Suppose $y \mid \boldsymbol x, t  \sim f$ 
 with  $E(y)<\infty$. Suppose we impute missing $t_i$ using
$p(t^*_{i} =1 \mid \boldsymbol x_i, y_i) \propto e_i f(y_i \mid \boldsymbol x_i,t^*_{i} = 1),$
    where $e_i$ is the true propensity score given covariate profile $\boldsymbol x_i$.
    Then $E(B) = 0$, where $B$ is defined as in \eqref{eq:B} and the expectation is taken with respect to the density $f$.
\end{theorem}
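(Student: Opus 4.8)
The plan is to collapse the bias $B$ in \eqref{eq:B} into a unit-level difference that vanishes in expectation, using the fact that the assumed imputation model is exactly the Bayes posterior for $t_i$ induced by the true propensity $e_i$ and the outcome density $f$. Since the summand of $B$ for each $i$ can be treated separately (everything is conditional on $\boldsymbol x_i$, hence on $e_i = e(\boldsymbol x_i)$), it suffices to show that each summand has expectation zero.

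First I would use $m_{i10} = 1 - m_{i11}$ and $m_{i01} = 1 - m_{i00}$ (the imputation probabilities sum to one) to simplify the two weighting factors appearing in \eqref{eq:B}, obtaining
\[
1 - m_{i11} + \frac{m_{i10}e_i}{1-e_i} = \frac{m_{i10}}{1-e_i}, \qquad 1 - m_{i00} + \frac{m_{i01}(1-e_i)}{e_i} = \frac{m_{i01}}{e_i}.
\]
Next I would insert the explicit posterior form of the imputation model. Writing $D_i(y) = e_i f(y \mid \boldsymbol x_i, 1) + (1-e_i) f(y \mid \boldsymbol x_i, 0)$ for the normalizing constant, and recalling that the observed $y_i$ equals $y_{i1}$ in the $m_{i10}$ term (true $t_i = 1$) and $y_{i0}$ in the $m_{i01}$ term (true $t_i = 0$), these become $m_{i10}/(1-e_i) = f(y_{i1}\mid \boldsymbol x_i,0)/D_i(y_{i1})$ and $m_{i01}/e_i = f(y_{i0}\mid \boldsymbol x_i,1)/D_i(y_{i0})$. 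Invoking MAR to treat $p_i$ as a function of the observed outcome, the $i$-th summand of $B$ reduces to
\[
p(r_i=1\mid \boldsymbol x_i, y_{i1})\,y_{i1}\,\frac{f(y_{i1}\mid \boldsymbol x_i,0)}{D_i(y_{i1})} \;-\; p(r_i=1\mid \boldsymbol x_i, y_{i0})\,y_{i0}\,\frac{f(y_{i0}\mid \boldsymbol x_i,1)}{D_i(y_{i0})}.
\]

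Finally I would take the expectation over $f$, under which $y_{i1}\sim f(\cdot\mid \boldsymbol x_i,1)$ and $y_{i0}\sim f(\cdot\mid \boldsymbol x_i,0)$. The first term then acquires a weight $f(y\mid \boldsymbol x_i,1)$ and the second a weight $f(y\mid \boldsymbol x_i,0)$, so both reduce to the single integral $\int p(r_i=1\mid \boldsymbol x_i, y)\,y\,f(y\mid \boldsymbol x_i,1)f(y\mid \boldsymbol x_i,0)/D_i(y)\,dy$; the two contributions are identical and cancel, giving $E(B)=0$ after summing over $i$.

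The step I expect to be most delicate is this final cancellation: it is driven entirely by the symmetry of the integrand $f(y\mid 1)f(y\mid 0)/D_i(y)$ under swapping the treatment labels, and it relies crucially on the relative minus sign between the $y_{i1}$ and $y_{i0}$ contributions in \eqref{eq:B}. I would also verify integrability so that the expectations exist and may be subtracted term by term; this follows from $E(y)<\infty$ together with the bounds $f(y\mid \boldsymbol x_i,0)/D_i(y)\le 1/(1-e_i)$, $f(y\mid \boldsymbol x_i,1)/D_i(y)\le 1/e_i$, and $p(r_i=1\mid \cdot)\le 1$, which keep each weighting factor bounded.
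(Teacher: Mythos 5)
Your proposal is correct and follows essentially the same route as the paper's proof: both reduce the $i$-th summand of $B$ to $p_i y_{i1} f(y_{i1}\mid \boldsymbol x_i,0)/D_i(y_{i1}) - p_i y_{i0} f(y_{i0}\mid \boldsymbol x_i,1)/D_i(y_{i0})$ and then observe that the two expectations are identical integrals that cancel. The only differences are cosmetic — you collapse the weighting factors via $m_{i11}+m_{i10}=1$ where the paper adds and subtracts $d_{i10}$ inside the normalizing constant, and you are slightly more careful in keeping $p(r_i=1\mid \boldsymbol x_i, y)$ inside the integral and in checking integrability.
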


\begin{proof}
Define $d_{i11} = f(y_{i1} \mid \boldsymbol{x}_i, t_i^*=1)$ to be the density of $y_{i1}$ under the outcome model when $t_i^{*} = t_i=1$.  Similarly, let $d_{i00} = f(y_{i0} \mid \boldsymbol{x}_i, t_i^*=0)$ be the density of $y_{i0}$ under the outcome model when $t_i^{*} = t_i =0$.  Likewise, we define 
$d_{i10}$ and $d_{i01}$ to  be the densities under the outcome model of $y_{i1}$ when $t_{i}^* =0$ (and $t_i =1$)  and of $y_{i0}$ when $t_{i}^{*} = 1$ (and $t_i = 0$), respectively.   
 
Under the assumptions of Theorem \ref{unbiased-imp}, 
we can write \eqref{m11}--\eqref{m00} as
\begin{eqnarray}
m_{i11} &=& d_{i11}e_{i}/(d_{i11}e_{i} + d_{i10}(1-e_{i})) = C_{i1}d_{i11}e_{i}\label{m11d}\\ 
m_{i01} &=& d_{i01}e_{i}/(d_{i01}e_{i} + d_{i00}(1-e_{i})) = C_{i0}d_{i01}e_{i}\label{m01d}\\
m_{i10} &=& d_{i10}(1-e_{i})/(d_{i11}e_{i} + d_{i10}(1-e_{i}))= C_{i1}d_{i10}(1-e_{i})\label{m10d}\\ 
m_{i00} &=& d_{i00}(1-e_{i})/( d_{i10}e_{i} + d_{i00}(1-e_{i}) ) = C_{i0}d_{i00}(1-e_{i}).\label{m00d}
\end{eqnarray}

Using \eqref{m11d}--\eqref{m00d}, \eqref{numyt} becomes
\begin{align}
\sum_{i=1}^n y_{i1}\biggl((1-p_i) + p_i C_{i1}d_{i11}e_{i} - p_{i}C_{i1}d_{i10}e_{i})\biggr) \label{expand}  
&= \sum_{i=1}^n y_{i1}\biggl((1-p_{i}) + p_iC_{i1}(d_{i11}e_{i} - d_{i10}e_{i}  + d_{i10} - d_{i10})\biggl)\\
&=\sum_{i=1}^{n} y_{i1}(1 - p_{i} C_{i1} d_{i10}).
\end{align}
By adding and subtracting $d_{i01}$ within the summation for the analogue to \eqref{expand}, \eqref{numyc} reduces to 
$n^{-1}\sum_{i=1}^{n} y_{i0}(1 - p_{i} C_{i0} d_{i01})$.  
Substituting each of these terms into \eqref{eq:Eimp}, we see that the bias term becomes
\begin{equation}
B=  - n^{-1} \sum_{i=1}^{n} y_{i1}p_iC_{i1}d_{i10i} - y_{i0}p_iC_{i0}d_{i01}.\label{expectationfinal}
\end{equation}

We leverage the data generating outcome model to reason probabilistically about \eqref{expectationfinal}. Specifically, we compute its expectation with respect to the conditional distribution of $y \mid \boldsymbol x, t$. 
{Since we assume the analyst uses the correct outcome model in the imputations,} for any $i$, we have $d_{i01} = f(y_{i0} \mid \boldsymbol x_i, t_i=1)$ and $d_{i10} = f(y_{i1} \mid \boldsymbol x_i, t_i=0)$. Thus, we have 
\begin{align}
 E(y_{i1}p_iC_{i1}d_{i10}) &=  p_i\int \frac{y f(y \mid \boldsymbol x_i, t_i =0)}{(e_{i} f(y \mid \boldsymbol{x}_i, t_i = 1) + (1-e_{i}) f(y \mid \boldsymbol x_i,t_i =0)} f(y \mid \boldsymbol x_i, t_i =1) dy \label{exp1}\\
    E(y_{i0}p_iC_{i0}d_{i01}) &= p_i\int \frac{y f(y \mid\boldsymbol x_i, t_i =1)}{(e_{i} f(y \mid \boldsymbol x_i, t_i =1) + (1-e_{i}) f(y \mid \boldsymbol x_i, t_i =0)} f(y \mid \boldsymbol x_i, t_i =0) dy. \label{exp0}
\end{align}
Comparing \eqref{exp0} and \eqref{exp1}, we see that both integrals match exactly.  Therefore, for any $D^{(m)}$, we have $E(B) = 0$ when imputations are made using the true outcome and propensity score models (for MAR treatments).  
\end{proof}

Of course, in practice we estimate both the outcome and propensity score models imperfectly. Nonetheless, the fact that one can achieve unbiasedness using \eqref{impmodel} demonstrates its potential advantages within a multiple imputation procedure.


\subsection{Implementation Guidelines}\label{sec:implementation}

We describe the general OMIT algorithm below. Here,  
we write $\hat{f}(y_i \mid \boldsymbol x_i, t^{*}_i)$ to represent  the analyst's estimated density for  $y_i$ given $\boldsymbol x_i$ and $t_{i}^{*}$ and $\hat{e}_i$ to represent the analyst's estimated propensity score.
 \begin{enumerate}
    \item \textbf{Model Estimation:} Using the complete data $\{(y_i, x_i, t_i): r_i = 0\}$, estimate the parameters of the models needed for computing $\hat{f}(y_i \mid t^{*}_i, \boldsymbol{x}_i)$  and $\hat{e}_i$.
    \item \textbf{Imputation:} For each unit $i$ where $r_i = 1$,
        \begin{enumerate}
            \item Compute unnormalized probabilities 
            $\tilde{q}_{i1} = \hat{e}_i\hat{f}(y_i \mid \boldsymbol x_i, t^{*}_i = 1)$ and   $\tilde{q}_{i0} = (1-\hat{e}_i)\hat{f}(y_i \mid \boldsymbol x_i, t^{*}_i = 0)$.
        \item Compute 
            $\hat{q}_{i} = \tilde{q}_{i1}/(\tilde{q}_{i0} + \tilde{q}_{i1}).$
        \item For $m = 1,\dots, M$,
                 impute $t_i^*$ for use in $D^{(m)}$ by sampling independently from a   
                $\mbox{Bernoulli}(\hat{q}_{i})$.
        \end{enumerate}
    \end{enumerate}

With no missingness among the covariates and outcomes, and MAR treatments, analysts can fit their outcome and propensity score models to the complete case data and still obtain unbiased estimates for the model parameters \cite{rubin1976inference}; they need not utilize the data for units with $r_i=1$ for those models. For further computational convenience, analysts can use point estimates of model parameters to compute \eqref{impmodel}. The simulation results in Section \ref{sec:sim} suggest that this approach can perform well provided there is enough observed data for standard regression modeling exercises.

To set the propensity score and outcome models, analysts should utilize standard diagnostic tools to aid model specification,
such as residual diagnostics.
We emphasize that the outcome modeling is used only for imputation.
Any misspecification of that model affects only the fraction of observations with missing treatments, and even then only those units where $t_i^* \neq t_i$ contribute to potential biases. For example, if the data exhibit positive correlation of $y$ and some covariate $x$, using a linear outcome model for OMIT may capture that correlation sufficiently to encourage imputing $t_i^*=1$ for records with large $x_i$, even if the true relationship between $y$ and $x$ is nonlinear.


When covariates or outcomes have missing values, analysts can use off-the-shelf routines like multiple imputation by chained equations (MICE) for the imputation, being sure to include both the covariates and outcomes as predictors in the conditional model for $t$.  This imputes $t_i^*$ from a single model  $f(t_i \mid \boldsymbol{x}_i, y_i)$, which is not the same as \eqref{impmodel}; see Section \ref{sec:conclusion} for further discussion of this point.  Alternatively, analysts could adopt a two-stage approach.  First, apply a standard MICE algorithm to impute values for the missing covariates, outcomes, and treatments.  Then, in each $D^{(m)}$, blank the imputations of the treatments and re-impute $t_i^*$ for units with $r_i=1$ using \eqref{impmodel}.  Here, we estimate  \eqref{impmodel} in each $D^{(m)}$ using its completed data  $\{\boldsymbol{x}_i, y_i, t_i: r_i=0\}$.  We leave investigation of this approach to future research.




\section{Simulation}\label{sec:sim}

In this section, we examine the benefits of OMIT using simulation studies. 
As part of the simulations, we examine the robustness of the OMIT algorithm to misspecification of the outcome model.
For the data generating distribution, we use 
\begin{align}
    \boldsymbol x &\sim \mbox{MVN}_{10}(\boldsymbol 0, \rho = 0.4)\label{xmodel} \\
     t \mid \boldsymbol x &\sim \mbox{Bernoulli}\{\Phi(\alpha_t + \boldsymbol x'\boldsymbol \beta_t)\} \label{tmodel}\\
     y \mid \boldsymbol x, t = 1 & \sim N(2 + x_1 + \beta_{y}x_{2}^{2}, \sigma^{2})\label{y1model}\\
    y \mid \boldsymbol x, t = 0 & \sim N(1 + x_1 + x_{2}^{2}, \sigma^{2})\label{y0model}\\
    r \mid \boldsymbol x,y &\sim \mbox{Bernoulli}\{\Phi(\alpha_{r} + \boldsymbol x' \boldsymbol\beta_r + \gamma \tilde{y})\}.\label{rmodel}
\end{align}
We sample from these distributions to construct a finite population comprising $n=1000$ observations, $\{(y_i, \boldsymbol x_i, r_i, t_i)\}_{i=1}^{n}$ . Specifically, we first  generate $1000$ independent covariate vectors from \eqref{xmodel}. For each $\boldsymbol x_i$, we simulate $y_{i1}$ from \eqref{y1model} and $y_{i0}$ from \eqref{y0model}. We fix the covariates and potential outcomes so that the ATE is $\tau = 1000^{-1} \sum_{i=1}^{1000} y_{i1} - y_{i0}$. We then  simulate 500 instances of random treatment assignments from \eqref{tmodel}. In each of these iterates, we use the realized $t_i$ to create the observed outcomes $y_i = t_i y_{i1} + (1-t_i)y_{i0}$ where $i=1, \dots, 1000$. Finally, in each iterate, we simulate each unit's $r_i$ from \eqref{rmodel} and remove the realized $t_i$ whenever $r_i =1$.

In \eqref{tmodel}, $\alpha_{t} = \Phi^{-1}(0.4)$, where $\Phi$ is the standard normal distribution function, resulting in approximately 40\% treated individuals. All elements in the $10$-dimensional $\boldsymbol\beta_{t}$ equal zero except for the second component $\beta_{t2} = 0.35$. Thus, individuals with larger values of $x_{2}$ are more likely to be treated. The outcome 
distributions in \eqref{y1model} and \eqref{y0model}  
encode an individual treatment effect of $\tau_i = 1+(\beta_y-1)x_{i2}^2$ with an average treatment effect of $1+(\beta_y-1)$. 
When $\beta_y=1$, we have a homogeneous treatment effect. By increasing $\beta_{y}$, we increase the heterogeneity of the individual treatment effects $\tau_i$ and their average. We consider three values of $\beta_{y} \in \{1,4,7\}$. We also also vary $\sigma \in \{1,2,4\}$ to increase the amount of noise in the outcome distribution. 

Since the missingness mechanism is conditionally independent of the treatments  given the covariates and outcomes, \eqref{rmodel} is a missing at random mechanism. 
We consider three values of $\alpha_{r} \in \{\Phi^{-1}(0.1), \Phi^{-1}(0.3), \Phi^{-1}(0.5)\}$ corresponding to marginal proportions of missing treatments of roughly 10\%, 30\%, and 50\%, respectively. We set all components of $\boldsymbol \beta_r$ to zero besides $\beta_{r2} = 0.75$. In addition, $\gamma = .1$ and $\tilde{y}$ is a mean-zero, unit-variance version of the realized outcomes. Thus, the covariate $x_{2}$ is centrally important to the study. It moderates the effects of the treatment and contributes to the prediction of the treatment assignment and missingness. 


In each of the 500 iterations,
we perform multiple imputation via the OMIT algorithm, varying the specification of the outcome model. As one option, we employ linear regression with main effects and first-order interactions between covariates and treatment status; we refer to this as OMIT:lm. For OMIT:lm, we include all covariates and their first-order interactions with treatment status in the model, without employing a mechanism for shrinkage or variable selection. As another option, we utilize Bayesian Additive Regression Trees (BART \citep{BART}), which can capture both nonlinearity in the response surface and interactions between the covariates and treatment status; we refer to this as OMIT:BART.  Further details on model specification including codes in  \verb|R| are available in the supplement. Finally, we conduct OMIT using the specification of the outcome models from \eqref{y1model}-\eqref{y0model}, which amounts to a linear regression of $y$ on $x_{1}$, $x_{2}^{2}$, and an interaction between $x_{2}^{2}$ and $t$; we refer to this as OMIT:Correct. 
We use the point estimates of the parameters underlying each outcome model to compute $\hat{f}(y \mid \boldsymbol x, t)$ for imputation, estimated with  
complete case data.
For OMIT:BART, we take the average across $800$ posterior samples of the mean function and error variance to evaluate the outcome density.

For the propensity score model, 
we estimate a probit regression of $t$ on $x_{2}$, i.e., the treatment assignment mechanism is correctly specified, and use the coefficient estimates to calculate propensity scores $\hat{e}_i$. We use these propensity scores  in \eqref{impmodel} for each OMIT specification and for estimation of the ATE via \eqref{IPWmi} for all methods. As a comparison, we also use $\hat{e}_i$ alone to impute missing treatments; 
we refer to this procedure as ``Naive MI.''   Finally, we also compute results for a complete-cases analysis, which we refer to as CC.  


For each imputation method, we create $M=20$ completed data sets and compute point estimates and 95\% confidence intervals for the ATE using \eqref{IPWmi}. We estimate the 
variances in \eqref{eq:miubar} using the \verb|survey| package in \verb|R| \citep{lumley2020package}, specifying an unequal probability sampling design based on the estimated propensity scores.  For evaluating the repeated sampling performances of the methods, in each iterate $rep=1, \dots, 500$, let $\hat{\tau}_{rep}$ be the point estimator for some particular method.  We consider
 the distributions of the differences $\hat{\tau}_{rep} - \tau$ across simulation iterations to get a sense of the biases. We also compute 
 the empirical coverage rates of the 95\% confidence intervals.

\begin{figure}[t]
    \centering
\includegraphics[width=0.9\linewidth,keepaspectratio]{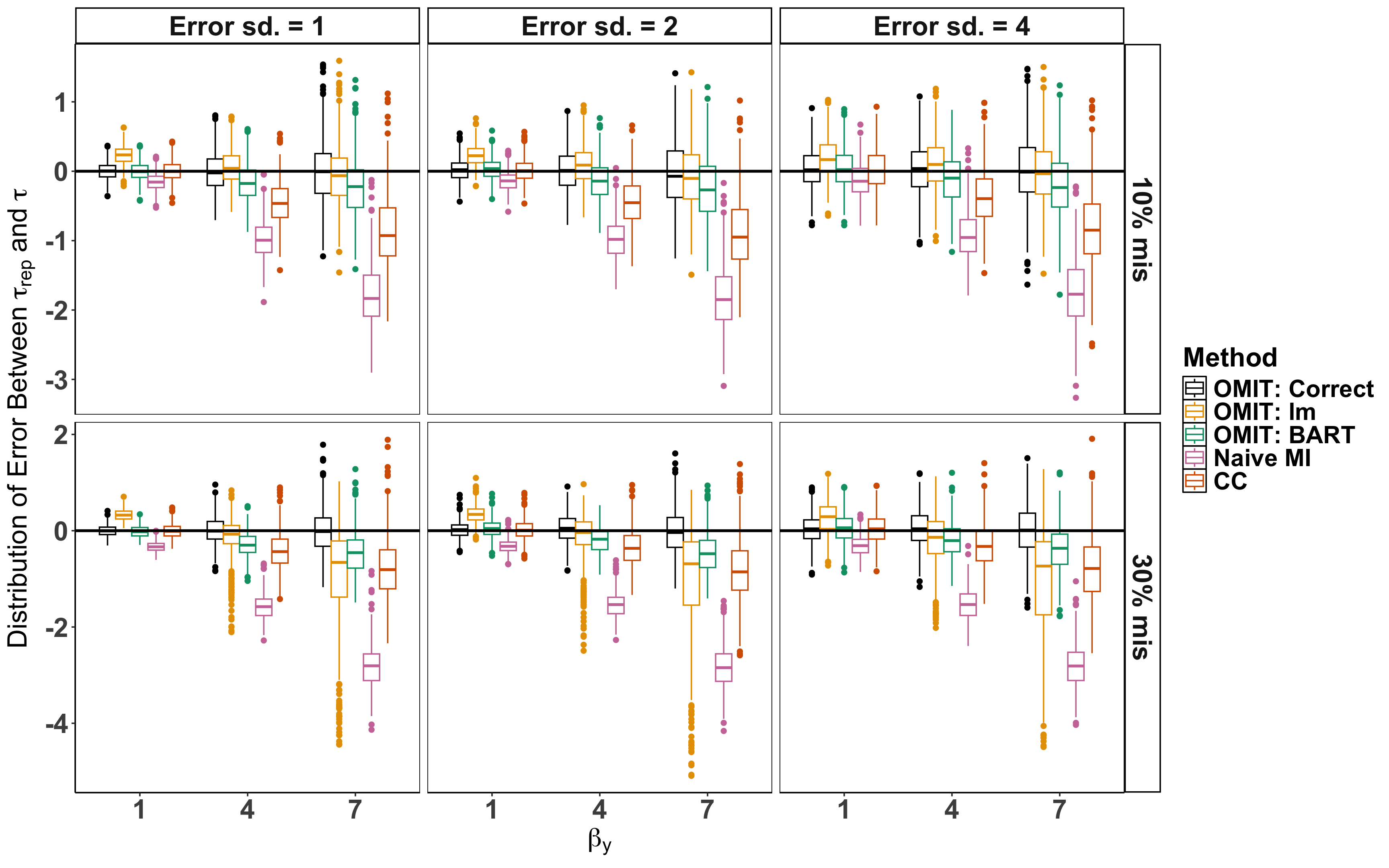}
    \caption{The distribution of differences between $\hat{\tau}_{rep}$ and $\tau$ across methods and simulation settings.} 
    \label{fig:bias}
    \end{figure}

Figure \ref{fig:bias} 
and Figure \ref{fig:coverage} display the distributions of the differences and the empirical coverage rates for the three versions of OMIT, Naive MI, and CC analysis. We include results for the 10\% and 30\% missingness scenarios for each combination of $(\beta_y, \sigma)$. Results for the 50\% missingness settings are qualitatively similar and are available in the supplement.

The simulation results accord with the theory in Section \ref{sec:cc} and Section \ref{sec:theory}.  CC analysis is approximately unbiased only when treatment effects are homogeneous, i.e., $\beta_y=1$, per Proposition \ref{biased-CC}.  Correspondingly, CC analysis tends to have lower than nominal coverage rates in all scenarios except when $\beta_y=1$. 
Naive MI has coverage rates that are generally below 50\%, especially in the simulation settings with heterogeneous treatment effects, and its mean-squared errors are two to ten times larger than those for the other imputation methods.  This poor performance occurs despite using a correctly specified propensity score model.
In contrast, under OMIT:Correct, 
multiple imputation inferences are approximately unbiased, as suggested by Theorem \ref{unbiased-imp}, with at least the nominal 95\% coverage rate.  It dominates Naive MI and CC analysis.  


Turning to comparisons among the three OMIT implementations, not surprisingly OMIT:Correct has the best performance. However, the other two OMIT models still perform reasonably well. OMIT:lm displays low bias and high nominal coverage rates when there is moderate treatment effect heterogeneity, more noise in the outcomes, and less missingness. In these scenarios, the quadratic relationship between $x_{2}$ and $y$ is less prominent, and the linear outcome model offers a reasonable fit. As the missingness proportion and $\beta_{y}$ increase, OMIT:lm performance begins to deteriorate, as the response surface becomes more sharply nonlinear and imputed treatments comprise a larger portion of the data. In these settings, OMIT:BART still offers close to nominal coverage rates and lower biases, demonstrating the benefits of leveraging nonparametric regression models that are capable of adapting to such features in the data. Further, because the linear outcome model in OMIT:lm does not provide any mechanism for shrinkage or selection, non-zero regression coefficients for the treatment-covariate interactions degrade its performance when the treatment effect is homogeneous. 

OMIT:BART generally outperforms CC and Naive MI, lending further support to the advantages of OMIT with flexible modeling.  OMIT:lm outperforms Naive MI except when $\beta_y=1$, particularly with less missing data and when $\sigma$ is large enough that a linear outcome model offers a reasonable approximation to the quadratic response surface.  OMIT:lm tends to perform better or nearly as well as the CC analyses, again except in the case of homogeneous treatment effects and strong nonlinearity.  Overall, 
the performances of the OMIT procedures demonstrate the benefits of accounting for the outcomes within the imputation models.

\begin{figure}[t]
    \centering
\includegraphics[width=0.9\linewidth,keepaspectratio]{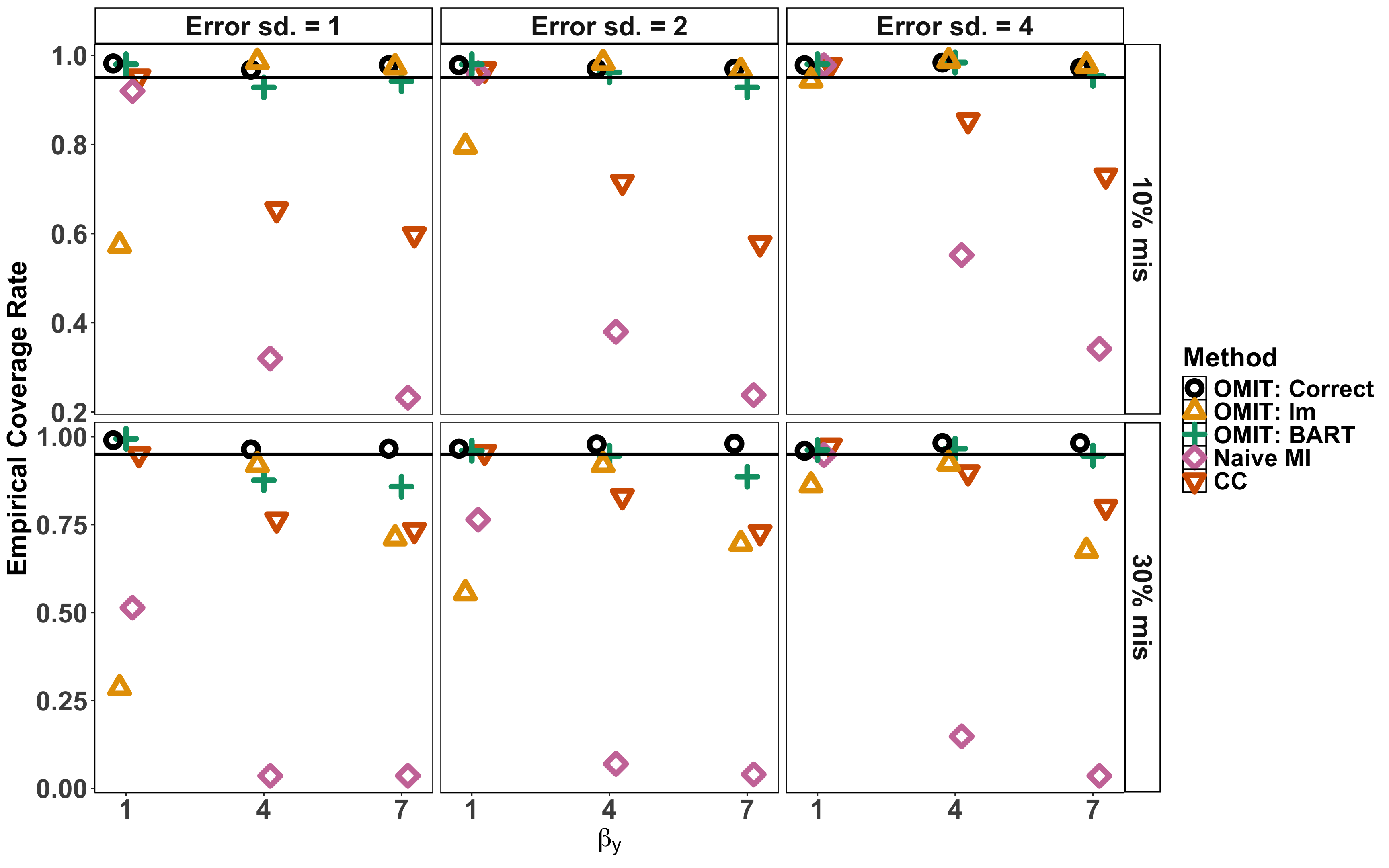}
    \caption{Empirical coverage rates of 95\% confidence intervals for the different methods across simulation scenarios.}
    \label{fig:coverage}
\end{figure}

\section{Application with the Longitudinal Survey of Youth}\label{sec:NLSY}

We now illustrate OMIT using a subset of variables from the National Longitudinal Survey of Youth (NLSY79 \cite{nysl}), as  analyzed in  \cite{mitra2016comparison} and  \cite{mitra2023latent}.  The latter work estimates the average effect of family income on Peabody Individual Assessment Test math scores (PIATM \cite{PIATM}) for the children in the study. This analysis does not claim definitive conclusions about a causal effect of income; rather, 
it uses the analysis to demonstrate the sensitivity of the estimand to different missing data approaches.
We also take 
this perspective.

We consider a subset of covariates comprising the child's sex, race, and birth weight; the mother's intelligence as measured by an armed forces qualification test; the mother's age at birth; and, indicators of whether the mother had any college-level education, whether the child was breastfed for over 24 weeks, whether the child attended daycare, and whether the child was born prematurely. The response is PIATM.  We scale all numeric covariates to mean zero and unit variance. Each child's treatment status is $t_i=1$ if 
their family income at the time of birth was greater than the empirical 75th percentile of income and $t_i=0$ if not. 
For purposes of illustration, we take as the study population all children with completely observed covariates and outcomes, but potentially missing treatments, which yields $n = 2189$ observations.
Approximately 20\% have missing treatment status.

We assume that treatment status is missing at random. Distributional summaries of several covariates, including race, mother's age, mother's intelligence, and PIATM scores, differ depending on whether income status is observed or missing. We observe an increase of nearly 8 points in average PIATM scores 
between the treated and control children, which provides initial evidence that family income has a positive effect on the outcome. Children with missing treatment status have covariate profiles that look more like the profiles of children with observed $t_i=0$.  For example, their mothers are younger with lower levels of education on average, and the children  are disproportionately non-white. Furthermore, these children have an average PIATM score of 99.7, which is close to the control group's average score of 99.5. By contrast, children in the treated group had an average PIATM score of 107. Taken together, we suspect the information in the  outcomes can sharpen imputation of missing treatment status.

We perform OMIT using two outcome models, namely an OMIT:BART and an OMIT:lm that uses a Gaussian linear model with main effects and first-order treatment-covariate interactions. Residual standard errors are slightly lower for the OMIT:BART model, demonstrating a closer fit to the data. For the propensity scores, which we use for imputation of missing treatments and in the IPW estimators, we estimate a probit regression with main effects for all covariates. We also examine imputations where we use only these propensity scores for the imputation model, i.e., we fit Naive MI. For each method, we generate $M=20$ completed datasets, which we use to generate IPW point estimates and standard errors as described in Section \ref{sec:MI} using the \verb|survey| package in \verb|R|. We also compare these estimates and standard errors to a complete case (CC) analysis that drops any observation missing treatment status.

\begin{table}[t]\centering
\caption{Average treatment effect estimates of family income status on PIATM scores in the illustrative NLSY analysis. 
}\label{tab1}
\begin{tabular}{@{}lcccc@{}}
\toprule
\textbf{Imputation Method} & \textbf{OMIT:lm} & \textbf{OMIT:BART} & \textbf{Naive MI} & \textbf{CC}  \\ \midrule
\textbf{ATE Estimate}      & 1.64        & 2.08         & 0.71  & 0.80      \\
\textbf{(Std. Error)}  & (1.52)         & (1.42)         & (1.46)  & (1.51)       \\[0.5em]
\bottomrule
\end{tabular}
\end{table}

Table \ref{tab1} summarizes the results. The 
ATE estimates under the OMIT models suggest stronger treatment effect   estimates than both Naive MI and CC analysis. To see why this is the case, in Figure \ref{probcomp} we compare 
the predictive probabilities of positive treatment status ($t_i^* = 1$) for children missing $t_i$ under each OMIT imputation model to those for Naive MI. 
We restrict comparisons to observations with outcome-assisted predictive probabilities greater than 0.25, as this represents the empirical rate of treatment in the observed data. Among these observations,
62\% of OMIT:BART predictive probabilities and 64\% of OMIT:lm predictive probabilities exceed the corresponding Naive MI predictive probabilities. Thus, the OMIT approaches are more likely to impute positive treatment status than the Naive MI procedure.  These individuals also have an average PIATM score over 108, which is similar to the average for the observed treated children.
Evidently, the outcome models have captured a positive association between treatment status and PIATM, which they utilize to sharpen the predictive probabilities for imputation.

\begin{figure}
    \centering
    \includegraphics[width=0.49\linewidth]{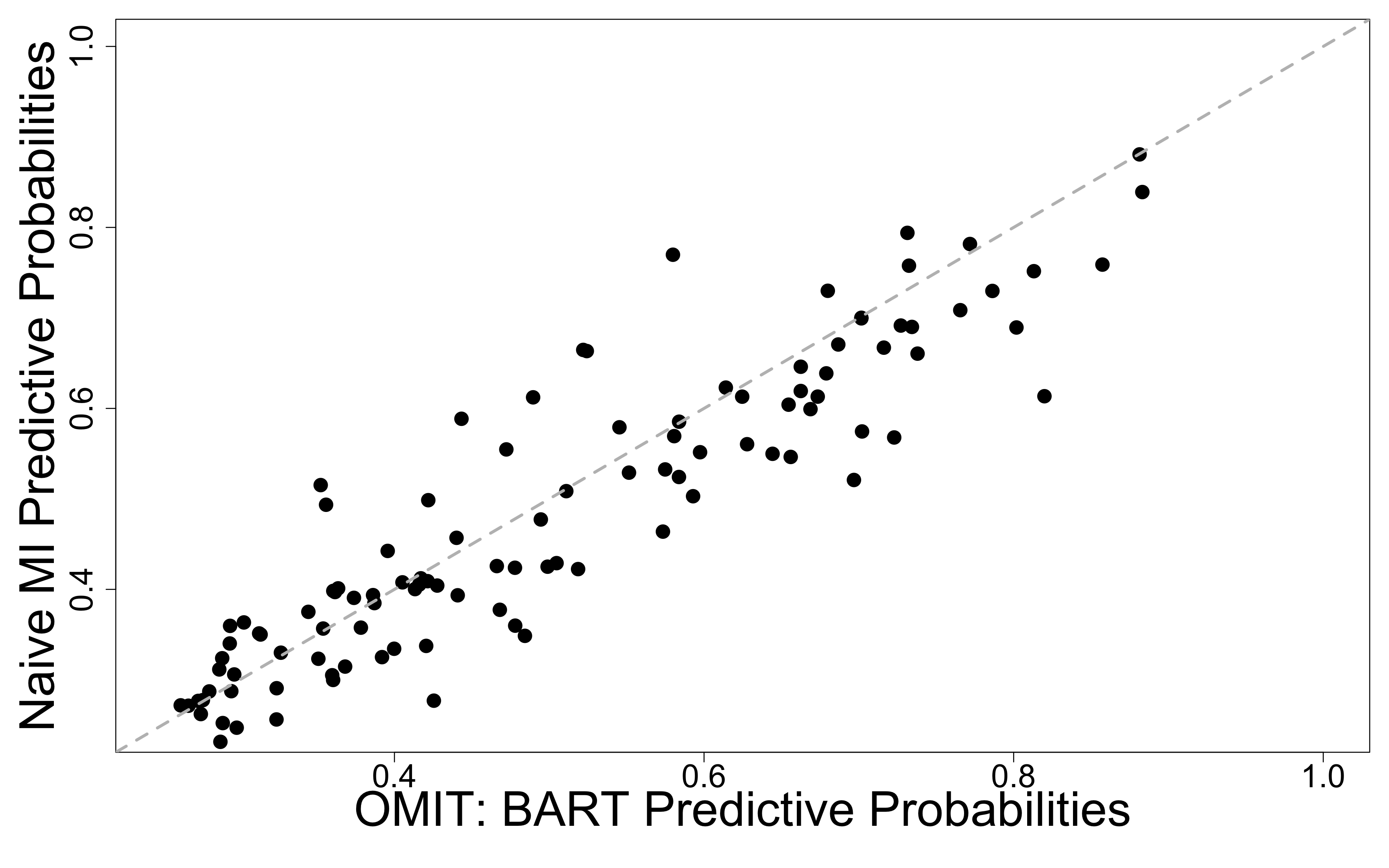}
    \includegraphics[width=0.49\linewidth]{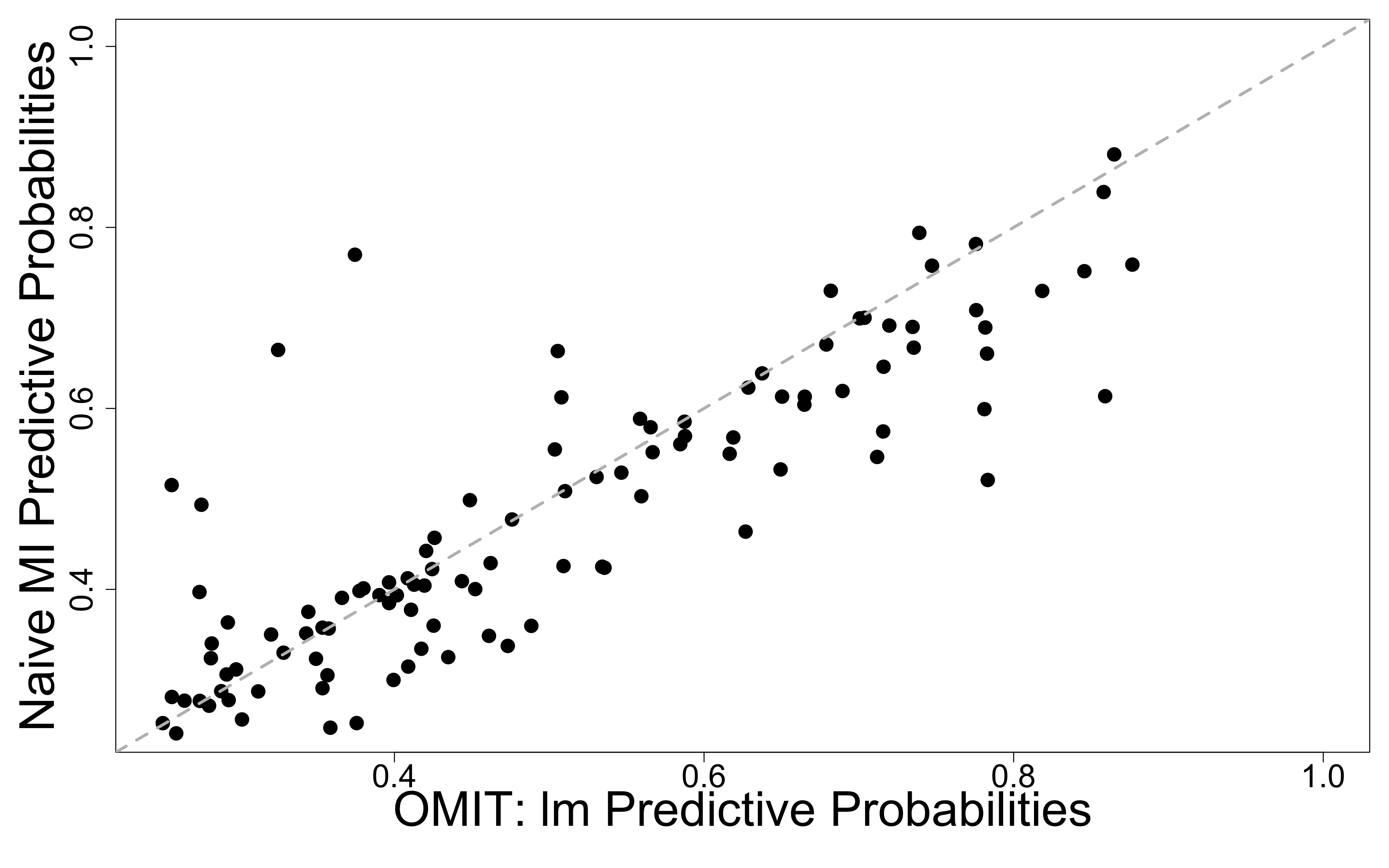}
    \caption{Comparison of predictive probabilities from the two OMIT models to those from Naive MI. } 
    \label{probcomp}
\end{figure}


\section{Conclusion}\label{sec:conclusion}

We find that adhering to a design-first philosophy  
with missing treatments, that is, imputing missing treatments using $p(t = 1 \mid \boldsymbol x)$ without using $y$, does not allow valid inference in general.  It seems that not strictly following this philosophy is a price to pay for having missing data \cite{heck:reiter, kamat:reiter, guha:reiter:merc}.  We also find some robustness to modest misspecifications of the outcome model.  Essentially, as long as the outcome model appropriately sharpens the predictive probabilities from imputation, the outcome-assisted procedure has the potential to mitigate bias in multiple imputation estimates of the ATE.  Of course, the outcome-assisted model is not robust to gross misspecification of the outcome model. Indeed, the use of propensity scores only for imputation---which performed relatively poorly in the simulation studies---can be viewed as an outcome-assisted model with a  uniform distribution for the outcomes. Thus, analysts should strive to use outcome models that reasonably describe the data as determined from careful model fitting procedures, or rely on flexible models like BART.

As mentioned briefly in Section \ref{sec:implementation}, analysts could use a single model  $p(t_i|\boldsymbol{x}_i, y_i)$ in lieu of the two sub-models in OMIT, even when $(\boldsymbol{x}_i, y_i)$ are fully observed. However, we speculate that analysts may find it more natural and easier to specify models for the propensity scores and outcomes, particularly when treatment effects are heterogeneous. 
The supplement describes a simulation study where the response surface for the outcomes is  highly nonlinear and the analyst uses a main effects only specification of the linear predictor for the model for $t$, as might occur in a bespoke application of MICE.  The misspecification results in badly biased estimates of the ATE, whereas OMIT:lm and OMIT:BART offer more accurate inferences. 
Regardless of the approach, we recommend using the outcomes to assist in multiple imputation of missing treatments.  

Our work only covers scenarios with missing at random treatments.  With not missing not at random treatments,
the imputation models and general OMIT algorithm need to be revised, since (i) we require a model for the missingness and (ii) the complete case data are not sufficient for consistent estimation of the outcome model or treatment assignment mechanism. 

\bmsection*{Author Biography}

\bibliography{wileyNJD-AMA.bib}

\end{document}


\title{Supplement to \\ ``Outcome-Assisted Multiple Imputation of Missing Treatments''}
\date{}

\author{Joseph Feldman and Jerome P. Reiter}

\ifblinded
\author{}
\else

\author{Joseph Feldman and Jerome P. Reiter}

\fi

\maketitle
\large 

\vspace{-10mm}

In this supplementary material, we provide a proof of Proposition 1 in the main text.  We also provide code for fitting the OMIT models using a linear outcome model and using BART.  Finally, we provide additional simulation results for a scenario where imputing directly from $p(t_i \mid \boldsymbol x_i, y_i)$ results in worse inferences compared to OMIT.  Here, equation numbers refer to the main text unless preceded by the letter A, for example, (A.1).

\appendix

\section{Proof of Proposition 1}
\begin{proposition}\label{biased-CC} Assume strong ignorability and SUTVA, and that $p(\sum_{i=1}^{n} \boldsymbol{1}(r_i = 1) = n) = 0$ for any $n$. Assume we use the true propensity scores in $\hat{\tau}_{CC}$ as defined in \eqref{IPW-CC}. 
 If one of the following conditions holds:
\begin{itemize}
    \item there are homogeneous conditional treatment effects, i.e., $\tau_{i} = \tau_{j}, \forall i \neq j \in \{1, \dots, n\}$,
    \item treatments are missing completely at random (MCAR), i.e., 
    $p(r_{i} = 1 \mid \boldsymbol x_i) = p$ \ for \ $i=1, \dots, n$,
\end{itemize}
then $E(\hat{\tau}_{CC}) = \tau$, where the expectation is taken with respect to the joint distribution of the treatment assignment and missingness mechanism.
\end{proposition}


\begin{proof}
    We prove the implications of each condition one at a time.  We begin by presuming homogeneous treatment effects.  As a reminder, the complete-cases IPW estimator in \eqref{IPW-CC} is
    \begin{eqnarray}
    \hat{\tau}_{CC} &=& n_{obs}^{-1}\sum_{i: r_{i} = 0} \frac{t_i y_i}{e_i}  - \frac{(1-t_i)y_i}{(1-e_i)}\\
        &=& n_{obs}^{-1}\sum_{i=1}^{n} \mathbf{1}_{r_i = 0} \left(\frac{t_i y_{i1}}{e_i}  - \frac{(1-t_i)y_{i0}}{(1-e_i)}\right)
    \end{eqnarray}
Under the causal assumptions, we can use an iterated expectation over the distributions of the assignment mechanism for $t_1, \dots, t_n$ and the nonresponse mechanism for $r_1, \dots, r_n$. Here, we treat $\{\boldsymbol x_i, y_i\}_{i=1}^n$ as fixed. We have 
\begin{eqnarray}
E(\hat{\tau}_{CC})
 &=&      
E\left(E\left(n_{obs}^{-1}\sum_{i=1}^{n}\mathbf{1}_{r_i = 0} \left(\frac{t_i y_{i1}}{e_i}  - \frac{(1-t_i)y_{i0}}{(1-e_i)}\right) \mid r_1, \dots, r_n\right)\right) \nonumber\\ 
        &=&  E \left(n_{obs}^{-1} \sum_{i=1}^{n} \mathbf{1}_{r_i=0} \tau\right) = E(n_{obs}^{-1} n_{obs} \tau) = \tau,\label{simp2}
    \end{eqnarray}
    since $y_{i1} - y_{i0} = \tau$ for any $i$ by assumption.
    
    
    Now suppose instead that treatments are missing completely at random. Then, we have 
\begin{eqnarray}
E(\hat{\tau}_{CC})
 &=&      
E\left(E\left(n_{obs}^{-1}\sum_{i=1}^{n}\mathbf{1}_{r_i = 0} \left(\frac{t_i y_{i1}}{e_i}  - \frac{(1-t_i)y_{i0}}{(1-e_i)}\right) \mid r_1, \dots, r_n\right)\right) \nonumber\\ 
        &=&  E \left(n_{obs}^{-1} \sum_{i=1}^{n} \mathbf{1}_{r_i=0} \tau_i\right)\\
&=& E\left(E\left(n_{obs}^{-1} \sum_{i=1}^{n} \mathbf{1}_{r_i=0} \tau_i \mid n_{obs}\right)\right) \label{simp2a}\\
        &=& E\left(n_{obs}^{-1} \sum_{i=1}^n (n_{obs}/n) \tau_i\right) = \tau.
    \end{eqnarray}
Here, $E\left(\mathbf{1}_{r_i=0} \mid n_{obs}\right) = n_{obs}/n$ for $i=1, \dots, n$ because of the MCAR mechanism, which selects a simple random sample of $n_{obs}$ out of $n$ units' treatments to be observed. 


\end{proof}

\section{Code for Outcome Assisted Multiple Imputation Using Linear Regression}\label{OMIT:lm}
\begin{lstlisting}[style=Rstyle]
### Outcome-assisted multiple imputation using linear regression

#observed data
## y: n realized outcomes
## r: n missingness indicators
## x: n \times p covariate matrix
## t: n treatment assignment indicators, with t_i = NA when r_i = 1
data<- data.frame(y,r,x,t)

#get propensity scores for each observation using main effects linear regression fit to complete cases
e_x<-predict(glm(t~x,data = data,family = binomial(link = "probit")), newdata = data,type = "response")

#fit outcome model to complete cases using main effects linear regression with treatment-covariate interactions to complete cases
outcome_model_lm<-lm(y~.*t, data = subset(data, select = c(y,x,t)))

#compute outcome assisted probabilities

## E[Y(1)| x]
y1_lm<- predict(outcome_model_lm, newdata = data.frame(cbind(subset(data , select = -c(y,t)),t = 1)))
## E[Y(0)| x]
y0_lm<-predict(outcome_model_lm, newdata = data.frame(cbind(subset(data , select = -c(y,t)),t = 0)))

#error standard deviation
sigma_lm = summary(y_mod_lm)$sigma

# compute unnormalized probabilites
tilde_q_1= e_x*dnorm(y,y1_lm, sd =sigma_lm) 
tilde_q_0=(1-e_x)*dnorm(y,y0_lm, sd =sigma_lm)

# extract sharpened probabilities
q_1<- tilde_q_1/(tilde_q_1 + tilde_q_0)

#number of multiple imputations
M = 20; imputed_data = vector('list',M)
for(m in 1:M){
  # mth completed data set
  data_m = data 
  # Impute
  data_m$t[data$r == 1] = rbinom(sum(data$r == 1),1,p_1)
  #save imputations
  imputed_data[[m]] = data_m
}
\end{lstlisting}
\section{Details on Outcome Assisted Imputation using BART}
We leverage the \verb|dbarts| packages under default settings to conduct OMIT:BART imputation. The excerpt below provides \verb|R| code for implementation, continuing the data analysis from Section \ref{OMIT:lm}.
\begin{lstlisting}[style=Rstyle]
#observed data
## y: n realized outcomes
## r: n missingness indicators
## x: n \times p covariate matrix
## t: n treatment assignment indicators, with t_i = NA when r_i = 1

library(tidyr)
library(dplyr)
library(dbarts)

data<- data.frame(y,r,x,t)

#get propensity scores for each observation using main effects linear regression fit to complete cases
e_x<-predict(glm(t~x,data = data,family = binomial(link = "probit")), newdata = data,type = "response")

#fit outcome model to complete cases using BART
X_bart<-dbarts::makeModelMatrixFromDataFrame(subset(data %>% drop_na(), select = -c(y,r)) 
outcome_model_bart<- dbarts::bart(X_bart, data$y[r==0],keeptrees = TRUE, verbose = F)


#compute outcome assisted probabilities

## E[Y(1)| x]
y1_bart<-colMeans(predict(y_mod_bart,makeModelMatrixFromDataFrame(data.frame(subset(dat_obs, select = -c(y, t)),t=1), drop = F)))
## E[Y(0)| x]
y0_bart<-colMeans(predict(y_mod_bart,makeModelMatrixFromDataFrame(data.frame(subset(dat_obs, select = -c(y, t)),t=0), drop = F)))

#error standard deviation
sigma_bart<-y_mod_bart$sigest

# compute unnormalized probabilites
tilde_q_1= e_x*dnorm(y,y1_bart, sd =sigma_bart) 
tilde_q_0=(1-e_x)*dnorm(y,y0_lm, sd =sigma_bart)

# extract sharpened probabilities
q_1<- tilde_q_1/(tilde_q_1 + tilde_q_0)

#number of multiple imputations
M = 20; imputed_data = vector('list',M)
for(m in 1:M){
  # mth completed data set
  data_m = data 
  # Impute
  data_m$t[data$r == 1] = rbinom(sum(data$r == 1),1,p_1)
  #save imputations
  imputed_data[[m]] = data_m
}
\end{lstlisting}

\section{Additional Simulation Results}
\subsection{Simulations from main text with 50\% missing treatments}
In this section, we present additional simulation results using the design features described in the main text.  First, we include results when 50\% of the treatments are missing at random. Second, we include results where we impute missing treatments using one model for $p(t_i \mid \boldsymbol x_i, y_i)$, which is specified as a probit regression using main effects of all covariates and the (standardized) outcome.
We label this approach Naive+Y MI. The results are summarized in Figure  \ref{fig:biasv2} and \ref{fig:coveragev2}. For convenience, these figures report the results presented in main text. 

Even with  50\% missingness, once again using OMIT with the correct outcome model results in approximately unbiased inferences with at least 95\% empirical coverage rates. With 50\% missingness, OMIT:BART generally has better performance than OMIT:lm. In these settings, there is weak signal in the observed data to estimate the parameters in the linear regression. 
We note that standard residual diagnostics confirm this feature.  Interestingly, the CC estimates in the 50\% missing data setting offer generally competitive inferences relative to the OMIT approaches. In this setting, the majority of observations in the data have similar probabilities of missing treatment, i.e., the missingness mechanism is close to MCAR. Thus, the performance of CC in these settings is supported by the insight from Proposition \ref{biased-CC}, especially since we have correctly specified the propensity score model.

\begin{figure}[t]
    \centering
\includegraphics[width=0.9\linewidth,keepaspectratio]{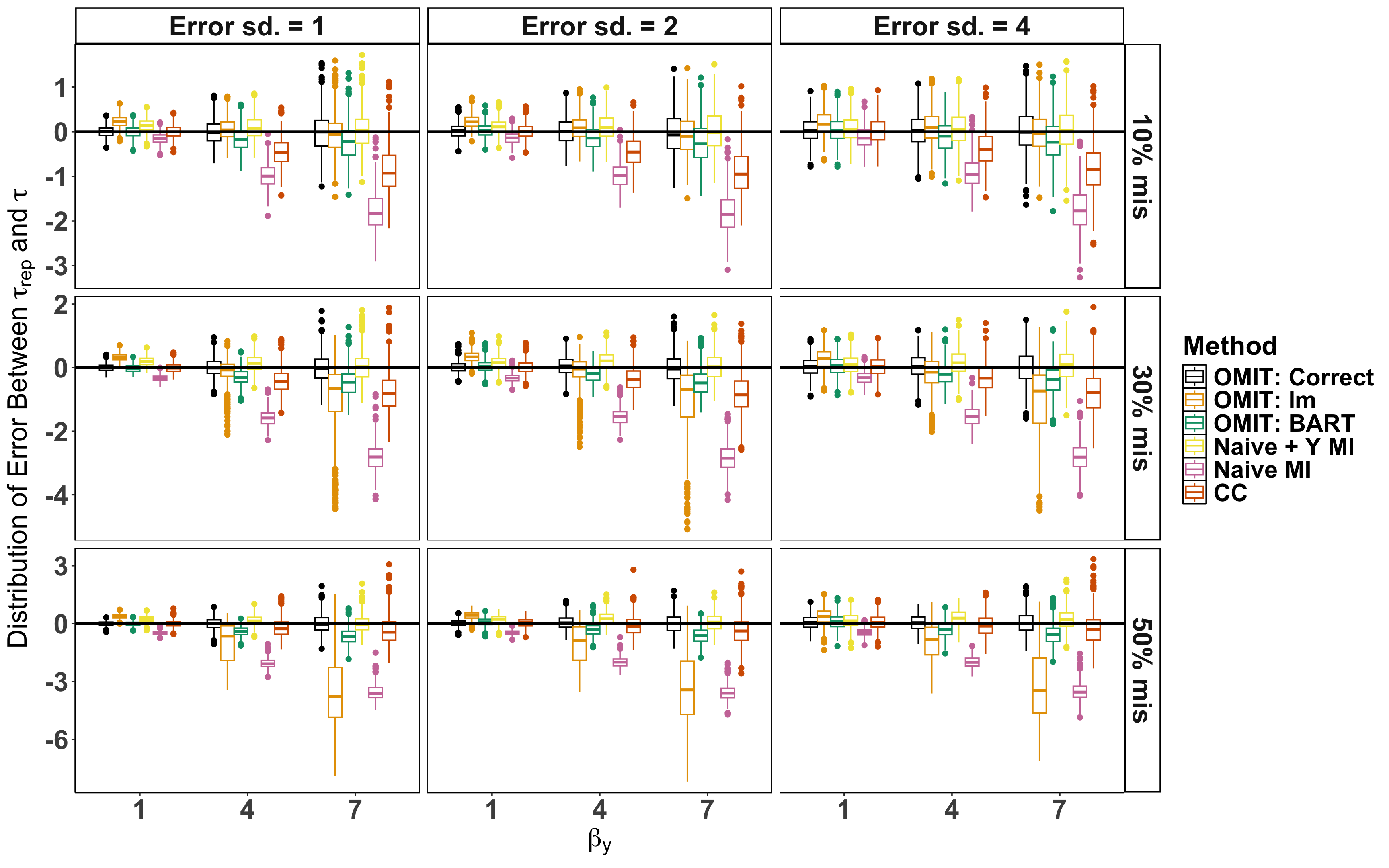}

    \caption{Completing the information in Figure \ref{fig:bias} in the main text to include 50\% missingness as well as the results from using a single model for specification of $p(t_i \mid \boldsymbol x_i, y_i)$ labeled Naive+Y MI.}\label{fig:biasv2}
    \end{figure}

\begin{figure}[H]
    \centering
\includegraphics[width=0.9\linewidth,keepaspectratio]{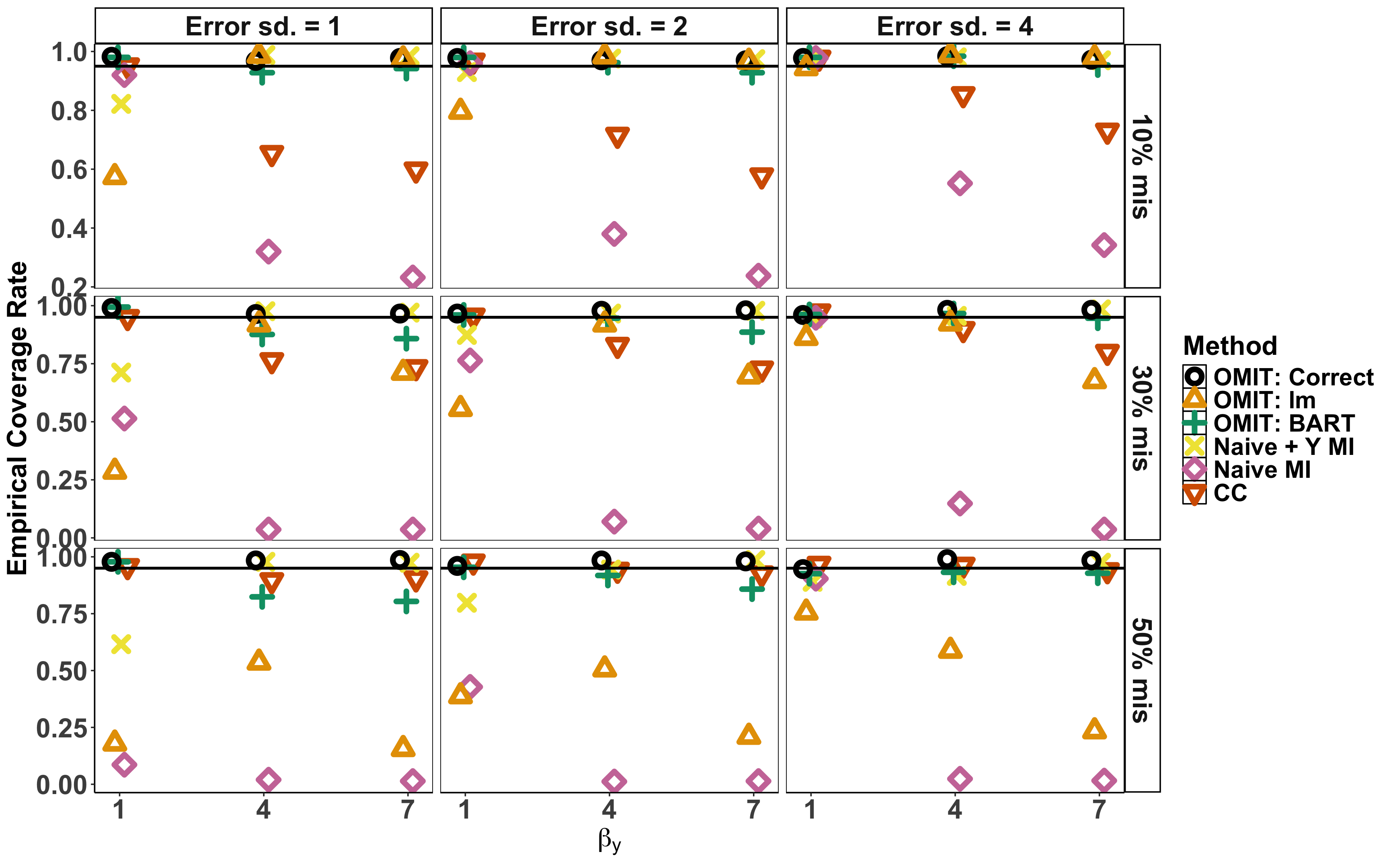}
     \caption{Completing the information in Figure \ref{fig:coverage} in the main text to include 50\% missingness as well as the results from using a single model for specification of $p(t_i \mid \boldsymbol x_i, y_i)$ labeled ``Naive+Y MI.''}\label{fig:coveragev2}
\end{figure}

Naive+Y MI performs competitively to the OMIT approaches in terms of point estimation  and confidence interval coverage rate. This is because there is a monotone relationship between treatment status and $y_i$ under the data generating models; larger $y_i$ correspond, in general, to treated individuals. The default probit model for the single model $p(t_i \mid \boldsymbol x_i, y_i)$ picks up on this.  When this monotonicity is relaxed and we use the same model for $p(t_i \mid \boldsymbol x_i, y_i)$, we see substantially inferior performance, as we show in the next section.

\subsection{Results for Naive+Y MI with Nonlinear Response Surfaces}
As alluded to in Section \ref{sec:conclusion} of the main text, single imputation models for $p(t_i \mid \boldsymbol x_i, y_i)$ may suffer from poor performance when it is challenging to model the relationship between $t$ and $y$ in the presence of $\boldsymbol x$. To illustrate this, we leverage the simulation design in Section \ref{sec:sim} of the main text but change the outcome models \eqref{y1model} and \eqref{y0model} to include cubic functions of $x_{2}$. We have
\begin{align}
     y \mid \boldsymbol x, t = 1 & \sim N(2 + x_1 + \beta_{y}x_{2}^{3}, \sigma^{2})\label{y1modelv2}\\
    y \mid \boldsymbol x, t = 0 & \sim N(1 + x_1 + x_{2}^{3}, \sigma^{2})\label{y0modelv2}
\end{align}
We  use OMIT:Correct adjusted for the new outcome model, OMIT:lm, and OMIT:BART. For Naive+Y MI, we  use a probit regression with main effects for covariates and scaled outcomes. Figure \ref{fig:biasNpY} displays the differences in the point estimates and $\tau$ for all methods.

\begin{figure}[t]
    \centering
\includegraphics[width=0.9\linewidth,keepaspectratio]{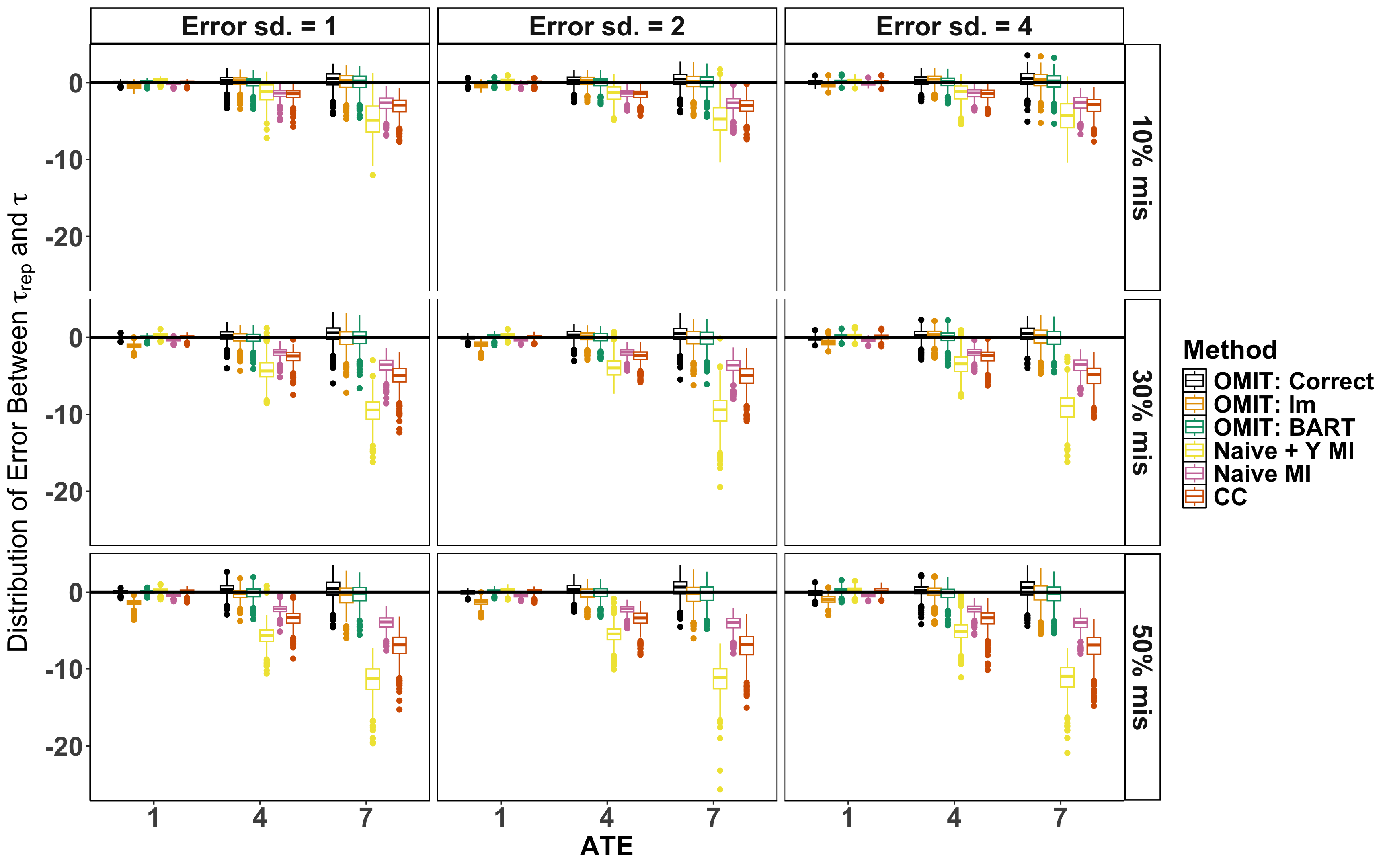}
    \caption{Illustrating the potential pitfalls of using a single model for $p(t_i \mid \boldsymbol x_i, y_i)$.  Naive+Y MI, which leverages the information in the outcomes under one predictive model for the treatments, clearly provides inferior inferences relative to the OMIT approaches.}\label{fig:biasNpY}
    \end{figure}

In this simulation, there is no longer a monotonic relationship between the probability of treatment and the outcomes: both large and small values are likely to be treated. This characteristic is not captured by the Naive+Y outcome model. More generally, it can be difficult to specify one model which adapts to these features. By contrast, all OMIT approaches perform well, demonstrating the modeling flexibility and robustness of the approach. The CC analysis is  biased in this setting due to the dependence of the missingness on the outcomes, which have substantially more variability due to the cubic term in the data generating models.